\theoremstyle{plain}
\newtheorem{theorem}{Theorem}%[section]
\newtheorem{lemma}{Lemma}%[section]
\newcommand{\real}{\mathbf{R}}
\newcommand{\N}{\mathbb{N}}
\renewcommand{\P}{\mathbf{P}}
\newcommand{\eps}{\varepsilon}
\DeclareMathOperator*{\argmin}{argmin}
\DeclareMathOperator*{\argmax}{argmax}
\DeclareMathOperator{\diverge}{div}
\DeclareMathOperator{\median}{Med}
\DeclareMathOperator{\Cov}{Cov}
\title{Locally adaptive image denoising\\ by a statistical multiresolution criterion}
\author{Thomas Hotz, Philipp Marnitz, Rahel Stichtenoth, \\ Laurie Davies, Zakhar Kabluchko and Axel Munk}
\date{November 2009}
\begin{document}

\maketitle

\begin{abstract}
We demonstrate how one can choose the smoothing parameter in image denoising by a statistical multiresolution criterion, both globally and locally. Using inhomogeneous diffusion and total variation regularization as examples for localized regularization schemes, we present an efficient method for locally adaptive image denoising. As expected, the smoothing parameter serves as an edge detector in this framework. Numerical examples illustrate the usefulness of our approach. We also present an application in confocal microscopy.
\\[6pt]\noindent
\textbf{Keywords:} Image reconstruction, statistical multiresolution criterion, bandwidth selection.
\\[6pt]\noindent
\textbf{2000 Mathematics Subject Classification:} Primary 68U10, 62G08; Secondary 60G70.
\end{abstract}

\section{Introduction}
\label{intro}

Image denoising is one of the main tasks in image analysis, as documented by numerous articles and books published on the subject, cf. e.g. \citep{ScheGraGroHaLe:2009}, \citep{BuaCoMo:2005} and \citep{AuKo:2002}.
Accordingly, statisticians have contributed their share: using probabilistic models for the true image, Bayesian methods were among the first to add a statistical perspective to the subject, see e.g. \citep{GeGe:1984}, \citep{Be:1986}, and \citep{Wi:2003}.
Taking a frequentist's point of view, image denoising becomes a smoothing or reconstruction problem, see e.g. \citep{HaTi:1986} and \citep{PoSpo:2000, PoSpo:2003} as well as \citep{KoTsy:1993}. Acknowledging the non-smooth nature of many images caused by sharp edges, cf. \citep{ChuGlaGoMa:1998} and \citep{Do:1999}, prompted a generalization of the well-established smoothing techniques, e.g. drawing on methods from one-dimensional jump detection, see \citep{Qiu:2005, Qiu:2007}; cf. also \citep{MraWeiBru:2006} for a unifying framework for many popular numerical and statistical denoising techniques.

Put simply, \emph{statistical image denoising} amounts to reconstructing a noiseless image $f$ given a noisy image $y$. Usually, one assumes the noise $\epsilon$ to be additive, i.e. $y = f + \epsilon$. In this article, we assume that the noise is generated at random; more specifically, we assume for pixels $(i,j)$ that
\begin{equation}
\label{model}
y_{ij} = f_{ij} + \epsilon_{ij}
\end{equation}
with $\epsilon_{ij}$ identically and independently distributed Gaussian random variables with zero mean and variance $\sigma^2$, i.e.
\begin{equation}
\label{noise}
\epsilon_{ij} \stackrel{\text{i.i.d}}{\sim} \mathcal{N}(0, \sigma^2);
\end{equation}
in particular we assume the value at each pixel to be a real number, i.e. $y_{ij}$, $f_{ij}$, $\epsilon_{ij} \in \real$. This models a grey-scale image, though in practice its grey levels are usually restricted to a finite number of discrete values, e.g. to integers between $0$ and $255$.
In many applications, a Gaussian assumption on the noise is therefore not very plausible but other noise processes will be more suitable. Nonetheless, for simplicity's sake we will lay out our basic ideas under a Gaussian assumption; possible extensions beyond this are briefly discussed at the end of Section~\ref{sec_mrc}. We note, however, that for a well-calibrated image $y$ making good use of the range of (not too few) possible values, this assumption is not very crucial, as long as the errors are i.i.d., symmetric and feature no heavy tails: the true image $f$ will then take values well inside the range, itself not necessarily being restricted to discrete values -- a property it passes onto the errors $\epsilon_{ij}$.

Figure~\ref{figdata} shows an artificial example where $f$ exhibits varying degrees of smoothness (left), and Gaussian white noise has been added to obtain the data $y$ (right). The assumption about the noise can then be exploited in order to distinguish between the `true', noiseless image $f$ and the noise $\epsilon$ as demonstrated in the following sections.

\begin{figure}[!tb]
\begin{center}
\setlength{\tabcolsep}{2mm}
\begin{tabular}{clcl}
\includegraphics[width=150pt]{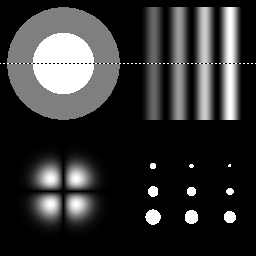}
&
\includegraphics[height=150pt]{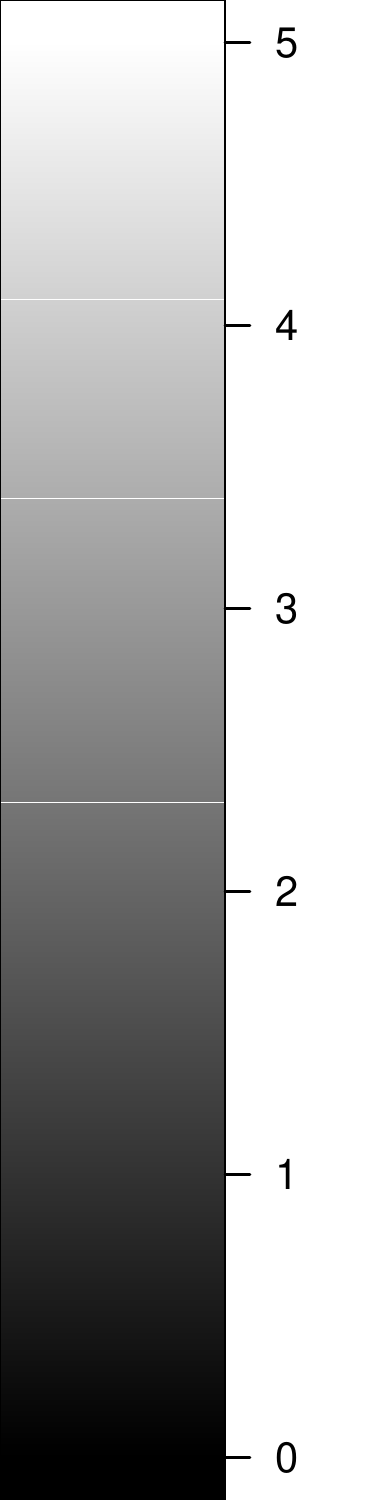}
&
\includegraphics[width=150pt]{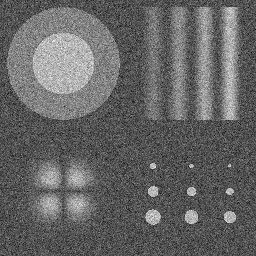}
&
\includegraphics[height=150pt]{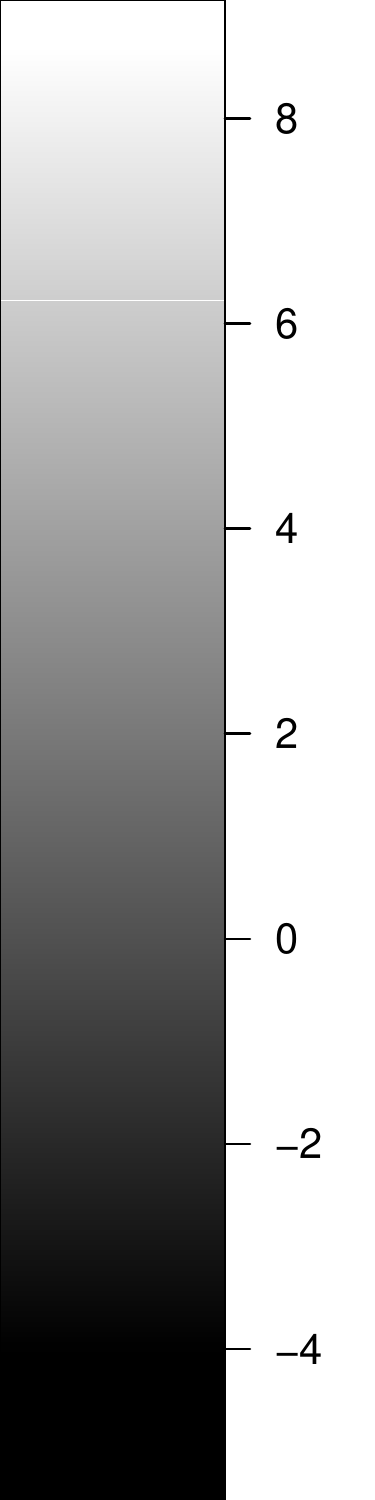}
\\
(a) & & (b) &
\vspace*{-12pt}
\end{tabular}
\end{center}
\caption{\label{figdata}(a) $256 \times 256$ pixel test image $f$ taking values in $[0,5]$; dashed line indicates row 64 shown in detail in Figure~\ref{cut}. (b) Simulated data $y$ with noise level $\sigma = 1$.}
\end{figure}

Clearly, this is impossible unless assumptions are made about $f$, e.g. that it varies slowly from pixel to pixel. In order to be able to formulate such `smoothness' assumptions more precisely, we let (in a slight abuse of notation) $f_{ij}$ denote the values of a function $f$ on a regular grid, i.e. $f:[0,1]^2 \rightarrow \real$ with $f_{ij} = f\big(\frac{i}{n}, \frac{j}{n}\big)$ where $i,j \in \{1, \dots, n\}$. Note that we use a square grid for ease of notation, with $x_{ij} = \big(\frac{i}{n}, \frac{j}{n}\big)$ in the following; one easily can extend all the analyses to rectangular grids, to higher dimensions, or to non-uniform sampling schemes through the use of finite elements, if required, cf. e.g. \citep{ErnGue:2004}. Now that $f$ can be viewed as a function, `smoothness' can be defined more rigorously to mean that $f$ belongs to some function class, e.g. that %$f \in C^2([0,1]^2)$ with $\sup f'' \leq \alpha$
$f$ is in a Sobolev or Besov ball, or that $f$ has bounded variation, cf. e.g. \citep{KoTsy:1993}.

Image denoising techniques like the ones we discuss in Section~\ref{sec_denois} generally require the choice of a \emph{smoothing} or \emph{regularization parameter} $a$ which determines how much smoothing is to be applied. This parameter might be \emph{localized}, allowing for different amounts of smoothing to be applied to different parts of the image; the regularization parameter then becomes a function $a:[0,1]^2 \rightarrow \real$. Since one might want to smooth more where the true image $f$ is smoother, this then enables one to adapt to differing levels of smoothness across the image. The reconstruction or denoised image $\hat f$ hence depends on that smoothness parameter. The aim of this research was to find a purely data-driven and generally applicable way to choose~$a$. This will be illustrated with two specific denoising techniques, namely for linear diffusion as well as for TV penalization. We stress, however, that our approach is in principle applicable to any regularization technique which depends on properly choosing a regularization parameter, the latter possibly being a function as described above.

The main idea can be summarized as follows: consider the residuals $r_{ij} = y_{ij} - \hat f_{ij}$; they depend on the smoothness parameter $a$. Indeed, if we smooth too much some structures which were present in $f$ have been smoothed away -- these structures then will be left in the residuals. Had we found the perfect reconstruction, i.e. for $\hat f = f$, however, the residuals form white noise, see \eqref{model}. One possible way to decide whether we smoothed too much is therefore to check whether the residuals look like white noise -- if there is still some structure left in them we must have smoothed too much. We note that this idea is at the heart of statistical methods for automatically selecting the regularization parameter.

As the key ingredient for choosing the smoothness parameter $a$ we propose a \emph{statistical multiresolution criterion}. Introduced in Section~\ref{sec_mrc}, it measures deviations from the hypothesis that the residuals are white noise. An important feature of this criterion is that it not only detects if the residuals deviate from white noise but also \emph{where}. This then allows for a locally adaptive choice of $a$.  In Section~\ref{sec_adapt} we derive an algorithm for a data-driven selection of $a$. Numerical details and results are given in Section~\ref{sec_num}, alongside an example from confocal microscopy. Finally, we summarize and discuss what we have achieved in Section~\ref{sec_discuss}. The interested reader might also find the Ph.D. thesis of \cite{Sti:2007} useful, on which parts of this article have been based.

While there exists an extensive literature on localized, data-driven ways of choosing the smoothing parameter in one-dimensional function estimation, see e.g. \citep{LeMaSpo:1997}, \citep{GijMa:1998} and \citep{DueSpo:2001}, and several articles have been published on multiresolution criteria for one-dimensional settings, cf. the references in Section~\ref{sec_mrc}, higher-dimensional situations have scarcely been treated. In fact, to the authors' knowledge \citep{BiMaMu:2006, BiMaMu:2008a} and \citep{DaMei:2008} are the only published work on a data-driven choice of the smoothing parameter by statistical multiresolution techniques in higher dimensions; however, \cite{BiMaMu:2006, BiMaMu:2008a} essentially still apply a one-dimensional version of the multiresolution criterion to this end whereas the two-dimensional application in \citep{DaMei:2008} is described very briefly and somewhat rudimentary.

The present paper therefore appears to be the first to give a comprehensive and detailed exposition of a two-dimensional multiresolution criterion, and also in proposing a method for choosing a localized smoothing parameter in a purely data-driven and general fashion. We claim that it can be applied to a wide range of reconstruction techniques depending on a global or local regularization parameter; this will be demonstrated exemplarily for linear diffusion filtering and total variation regularization below.

Let us now consider these commonly used methods for image reconstruction, i.e. for noise removal.

\section{Image denoising}
\label{sec_denois}

Among the best studied denoising technique in image processing is the \emph{linear diffusion filter}, see e.g. \citep{Wei:1998} or \citep{GuiMoRya:2004}, which has the physical interpretation of an evolution of the \emph{heat equation}. For constant diffusivity $a \in \real$, it is given by
\begin{equation}
\label{homdiff}
\begin{cases}
\frac{\partial}{\partial t} u_a(x,t) = a\ \triangle_x u_a(x,t) \\
u_a(x, 0) = y
\end{cases}
\end{equation}
which is solved by the convolution of the initial data $y:\real^2 \rightarrow \real$ with a Gaussian kernel, i.e.
\begin{equation}
\label{solution}
u_a(x_*,t) = \int K_{\sqrt{2at}}(x - x_*) y(x) dx
\end{equation}
with
\begin{equation}
\label{kernel}
K_h(x) = \frac{1}{2 \pi h} \exp \Big( -\frac{\Vert x\Vert^2}{2h^2} \Big);
\end{equation}
it is therefore also called \emph{Gaussian filtering}.
This diffusion filter is very well understood by now: it is a low-pass filter effectively reducing white noise; the heat equation is the limit of repeated convolving with any kernel fulfilling some moment conditions \citep[Ch.~2]{GuiMoRya:2004}; it gives rise to a linear scale space with many desirable properties \citep{ChauMa:2000}; and there are many sets of such properties which uniquely determine this scale space, see e.g. \citep{Wi:1983}, \citep{Koe:1984}, \citep{AlGuiLioMo:1993}, or \citep[Table~1.1]{Wei:1998} for an overview.

Note that the aim of this paper is not to propagate this specific filter but rather to illustrate with this particular example how statistical multiresolution analysis can be used to effectively choose the diffusivity, or some other regularization parameter, globally or locally -- even in the case when the optimal choice depends on the smoothness of the true image $f$; other regularization techniques will be discussed at the end of this section.

In statistics, model \eqref{model}, where $f$ is not assumed to have a certain low-dimensional, parametric form but instead to belong to some non-parametric function class, is called a (two-dimensional) \emph{nonparametric regression model}. A standard approach to estimate $f$ from the data $y$ is to use a \emph{kernel estimator} $\hat f_K$: choosing some kernel $K:\real^2 \rightarrow \real$, one forms a local average where the kernel defines the weights,
\begin{equation}
\hat f_K(x_*) = \frac{\sum_{ij} K(x_{ij} - x_*) y_{ij}}{\sum_{ij} K(x_{ij} - x_*)},
\end{equation}
for arbitrary $x_* \in [0,1]^2$. A popular choice for $K$ is the isotropic Gaussian kernel $K_h$ with bandwidth $h$ given in \eqref{kernel}.
We refer to \citep{WaJo:1995} for a broader treatment of kernel estimation, mentioning just one of the many textbooks on the subject.

From \eqref{solution} we see that, after discretization in space and ignoring boundary effects, the solution of the heat equation at time $t$ is the kernel estimator with bandwidth $h = \sqrt{2at}$, i.e. the r\^ole the (global) bandwidth is playing in kernel estimation is played by time when smoothing with the heat equation. Alternatively, we can vary the diffusivity and fix the endpoint in time since $u_1(x, t) = u_{t}(x, 1)$. We will take the latter approach, calling the process \emph{homogeneous diffusion}, with the diffusivity $a \in \real$ to be specified. Finally, we also discretize time to only two time points, namely $0$ and $1$, thereby obtaining a one-step approximation to the heat equation which is computed on the same grid as the data are on. For $a \in \real$, this defines an estimator $\hat f_\text{hom.diff.}$ through
\begin{equation}
\label{homdiffest}
\hat f_\text{hom.diff.} - y = a\ \tilde\triangle_x \hat f_\text{hom.diff.}
\end{equation}
where $\tilde\triangle_x$ is a discretization of the Laplacian in space.

Clearly, the choice of the diffusivity -- or, equivalently, of the bandwidth -- is critical for the performance of the estimator: a large diffusivity will reduce the variance of the estimator but at the expense of increasing its bias. While the variance depends on the noise level $\sigma$, the smoothness of the function influences the bias: the smoother the function, the less the bias, with constant functions giving no bias at all.
This is illustrated in Figure~\ref{varyglobal}: while a large diffusivity (left) is permissive in regions where the underlying signal, see Figure~\ref{figdata}(a), is smooth, e.g. in the interior of the large circle or within the background, sharp edges and smaller features like the circles in the lower right or the ``valleys'' in the lower left are smoothed away; the latter need a lower diffusivity (right) but this then results in unnecessary and unwanted undersmoothing in the upper left part.
The optimal diffusivity strikes a balance between bias and variance -- a goal which is difficult to achieve, given that the bias depends on the smoothness of the \emph{unknown} function. An estimator that chooses the diffusivity in a purely data-driven manner will be called \emph{adaptive}.

\begin{figure}[!tb]
\begin{center}
\setlength{\tabcolsep}{2mm}
\begin{tabular}{ccl}
\includegraphics[width=150pt]{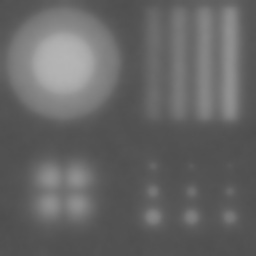}
&
\includegraphics[width=150pt]{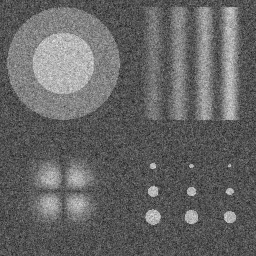}
&
\includegraphics[height=150pt]{sn5_pal.pdf}
\\
(a) & (b) &
\vspace*{-12pt}
\end{tabular}
\end{center}
\caption{\label{varyglobal}Homogeneous diffusion applied to the data in Figure~\ref{figdata}(b), with diffusivity chosen as $a = 40$ in (a), and $a = 0.1$ in (b).}
\end{figure}

Moreover, instead of choosing a global, one-fits-all, diffusivity $a$, one would rather want to choose it locally, as the function's smoothness is a local feature, too, cf. again Figure~\ref{varyglobal}. Estimators that choose the diffusivity locally in dependence on the data observed are \emph{locally adaptive}. We stress that in the context of our work ``adaptivity'' is not meant in the sense that the estimator attains certain optimality properties over scales of spaces (see e.g. \citep{Le:1990}, \citep{Tsy:1998} and \citep{DueSpo:2001}), but we rather use this term heuristically to express that the estimator chooses e.g. its diffusivity in a data-driven fashion, possibly taking local smoothers of $f$ into account. Deriving such an estimator is the aim of this paper. It will be achieved by the use of the \emph{statistical multiresolution criterion} which allows to statistically decide whether some reconstruction's residuals still contain parts of the signal. We defer to Section~\ref{sec_mrc} for a longer discussion of its analytical properties and to Section~\ref{sec_adapt} for its application to image reconstruction.

The task of choosing the bandwidth locally is similar to selecting a spatially varying diffusivity $a(x)$ in \eqref{homdiff}, and correspondingly in \eqref{homdiffest}. We point out that, nonetheless, these techniques are not equivalent; this can easily be seen if one compares the effect of a zero diffusivity along a curve which inhibits the exchange of information across even if the diffusivity is large close to that curve. Hence, local diffusivities allow to respect sharp edges. Note that in the case of a local diffusivity $a:[0,1]^2 \rightarrow \real$, equation \eqref{homdiffest} can still be solved very efficiently as opposed to the computation of kernel estimators with varying bandwidths, see Section~\ref{sec_num} for details. We thus obtain an estimator $\hat f_\text{inhom.diff.}$ through this \emph{inhomogeneous diffusion} process. The corresponding linear operator will be denoted by $L_a$, i.e.
\begin{equation}
\label{inhomdiffest}
\hat f_\text{inhom.diff.} = L_a y = a\ \tilde\triangle_x L_a y + y.
\end{equation}
For details on how to solve this equation as well as on differences to local bandwidth selection, we refer to \citep{Sti:2007}.

The variational formulation of the continuous version of \eqref{inhomdiffest} is given by
\begin{equation}
\label{varinhomdiff}
\argmin_{g \in H^2}\ \Big\Vert \frac{ g - y }{a} \Big\Vert^2 + \int \vert \nabla g \vert^2
\end{equation}
where $\Vert \cdot \Vert$ and $\vert \cdot \vert$ denote the $L_2$-norm and the Euclidean norm in $\real^2$, respectively. Here, $H^2$ denotes the Sobolev space of functions in $L_2$ having weak second derivatives also in $L_2$, i.e. $H^2 = \{ g\ :\ \sum_{\vert \alpha \vert \leq 2} \Vert D^\alpha g \Vert < \infty \}$. Clearly, the first term in \eqref{varinhomdiff} is a weighted data fit whereas the second term is a smoothness penalty; this allows to view $a$ also as local weights, or being proportional to a locally varying standard deviation $\sigma$ of the errors.

Other penalties can also be thought of: considering the poor performance of the kernel estimator at sharp edges, one might also be interested in a total variation (TV) penalty as introduced by \cite{RuOshFa:1992},
\begin{equation}
\label{vartv}
\hat f_\text{TV} = \argmin_{g \in BV}\ \Big\Vert \frac{ g - y }{a} \Big\Vert^2 + \int \vert \nabla g \vert\, ,
\end{equation}
where the second term symbolically stands for the TV semi-norm, i.e. $\int \vert \nabla g \vert := \sup_{\phi \in C_c^1(\real^n), \Vert \phi \Vert_\infty \leq 1} \int g \diverge \phi$ where $\phi \in C_c^1(\real^n)$ iff $\phi \in C^1(\real^n)$ and $\phi$ has compact support. We also assume a discretization in space as above; as usual, $BV$ denotes the space of functions with bounded variation, defined as $BV = \{ g\in L_1\ :\ \int \vert \nabla g \vert < \infty \}$.

The main question that needs to be answered is how to choose the diffusivity (or weights) $a$. This will be addressed in the following two sections. We note that it is not always possible to localize the regularization parameter. One class of denoising techniques for which this is the case are iterative methods; there, the iteration at which to stop the algorithm plays the r\^ole of the regularization parameter, see e.g. \citep{BiMaMu:2006, BiMaMu:2008a} for the EM algorithm, a.k.a. Richardson-Lucy algorithm, or \citep{BiHoRuyMu:2007} for a general treatment of iterative regularization schemes. We therefore also discuss how to select $a$ globally, although the focus will be on a local selection strategy.

\section{A statistical multiresolution criterion}
\label{sec_mrc}

We have assumed the errors $\epsilon_{ij}$ to be white noise, see \eqref{noise}. Clearly, considering the residuals $r_{ij} = y_{ij} - \hat f_{ij}$ for some estimator $\hat f$, we would want them to be as close to white noise as possible -- if there is any structure left in the residuals then the estimator must have missed essential features of the true $f$. In this section, we therefore aim to derive a statistical test for the hypothesis that the residuals are white noise, i.e. $r_{ij}  \stackrel{\text{i.i.d}}{\sim} \mathcal{N}(0, \sigma^2)$, against the alternative that there is some structure left in them. To this end we define random variables
\begin{equation}
\omega_P = \frac{1}{\sqrt{\sharp\{ x_{ij} \in P\}}} \sum_{x_{ij} \in P} r_{ij}
\end{equation}
for suitable subsets $P \subseteq [0,1]^2$ of the image domain, to be specified later. We call $\omega_P$ the \emph{multi\-resolution coefficient} of $P$.
The collection ${\mathcal P}_n$ of
subsets $P$ cannot be too large and will typically be of polynomial
order $O(n^k)$ of the number of observations $n$. This is the case if
${\mathcal P}_n$ is a Vapnik-Cervonenkis (VC) class
of subsets, cf. \citep{Po:1984,DeLu:2001}. In fact the family $\mathcal P_n$ must be a VC-class for
the procedure to make sense as otherwise too many subsets are picked out and the stochastic fluctuation of the random variables $\omega_P$ can no longer be controlled simultaneously over all subsets $P \in \mathcal P_n$.
The choice of ${\mathcal P}_n$ is subtle as it influences
the limit behaviour; it will be addressed later where we propose
amongst others a dyadic squares partitioning.

Note that $\omega_P \sim \mathcal{N}(0,\sigma^2)$ under the hypothesis. If however there is some information left in the residuals then the mean of some residuals will no longer be $0$. Hence the corresponding multiresolution coefficients $\omega_P$ are also no longer distributed around $0$ but around the mean of the signal left in the residuals, thus becoming large in absolute value in comparison to the expected behaviour of white noise.

Given an appropriate collection of subsets $\mathcal P_n$ of $[0,1]^2$
and $\alpha, 0 < \alpha <1,$ we can determine (e.g. through simulations) some
$\tau_n=\tau_n(\alpha)$ such that
\begin{equation}
\P\left( \max_{P \in \mathcal{P}_n}
\frac{\big\vert \sum_{x_{ij} \in P} Z_{ij} \big\vert} {\sqrt{\sharp\{
    x_{ij} \in P\}}}\le \sqrt{\tau_n\log n^2}\right)=\alpha, 
\end{equation}
where $Z_{ij}$ are standard Gaussian white noise random variables. For
any function $g:[0,1]^2\rightarrow \real$ and $P \in \mathcal P_n$ we define
\begin{equation} \label{sumresid}
w(g,P)=\frac{\sum_{x_{ij} \in P}
  (y_{ij}-g_{ij})}{\sqrt{\sharp\{ x_{ij} \in P\}}} 
\end{equation}
where $g_{ij}=g(i/n,j/n)$ and put
\begin{equation} \label{approxreg}
{\mathcal F}_n={\mathcal F}_n({\mathcal P}_n,\sigma,\tau_n)= \big\{g:
\max_{ P \in {\mathcal P}_n} \vert w(g,P)\vert \le \sigma \sqrt{\tau_n
  \log n^2\,}\,\big\}.
\end{equation}
For data generated under (\ref{model}) with $\epsilon_{ij}=\sigma
Z_{ij}$ it is seen that ${\mathcal F}_n$ is an exact, universal and
non-asymptotic confidence region for any $f: [0,1]^2 \rightarrow\real$:
\[\P(f\in {\mathcal F}_n)=\alpha\]
provided this event is measurable, see \citep{DaKoMei:2009}. The confidence region $\mathcal F_n$ contains many functions which
are of little interest, for example, all functions which interpolate
the data. This may be seen as a severe case of overfitting. In general,
however, we are interested in simple or, if possible, the simplest
functions in $\mathcal F_n$ where simplicity may be defined in terms of
smoothness, shape or sparsity or some combination of all
three. Regularization within $\mathcal F_n$ leads to optimization
problems such as
\begin{equation}
\label{eq:tvmin}
\text{minimize}\quad TV(g^{(k)})\quad\text{subject to}\quad g \in {\mathcal
F}_n\quad\text{for some}\quad k=0,1,\ldots
\end{equation}
where $TV(g^{(k)})$ denotes some definition of total variation of the
function $g^{(k)}$. For examples of this approach for one-dimensional data
we refer to \citep{MaVDGee:1997} and \citep{DaKoMei:2009}.

 In some cases where the optimization problem is algorithmically too
 difficult to be solved we may nevertheless have a sequence ${\tilde
   f}_m,m=1,2,\ldots$ of good candidate functions of increasing 
complexity. This is the case we consider in this paper and the strategy
consists of choosing the first function ${\tilde f}_m$ which lies in
$\mathcal F_n$. The success of this strategy depends largely on how good
the candidate functions are. 

The definition of ${\mathcal F}_n$ involves $\sigma$ which has to be estimated from the data. We propose a robust estimator for this purpose in Section~\ref{sec_adapt}, see (\ref{sigmahat}). We note that it would be possible to refine the simple substitution of ${\hat \sigma}_n$ for $\sigma$ slightly so that  ${\mathcal F}_n$ becomes an honest \citep{Li:1989, GeWa:2008}, universal and non-asymptotic confidence region for $f$, i.e.
\[\P(f\in {\mathcal F}_n)\ge \alpha.\] 

As already mentioned $\tau_n(\alpha)$ can be obtained by
simulations. However, if the asymptotic behaviour of $\tau_n$ can be
determined then it is often sufficient to use $\tau_{\infty}=\lim_{n
  \rightarrow \infty}\tau_n$. This is not a simple problem and we
consider it in more detail below. In the cases we consider we have
$\tau_{\infty}=2$.

Following these considerations, we define the
\emph{statistical multiresolution criterion} $M$ by
\begin{equation}
\label{mrc}
M_n = \frac{1}{\sqrt{2 \log n^2}} \max_{P \in \mathcal{P}_n} \vert \omega_P \vert = \frac{1}{\sqrt{2 \log n^2}} \max_{P \in \mathcal{P}_n} \frac{\big\vert \sum_{x_{ij} \in P} r_{ij} \big\vert} {\sqrt{\sharp\{ x_{ij} \in P\}}},
\end{equation}
for some collection $\mathcal{P}_n$ of subsets $P$, the normalization factor $\sqrt{2 \log n^2}$, or $\sqrt{2 \log n^d}$ in a $d$-dimensional setting, becoming clear in a short while.
Note that the choice of $\mathcal{P}_n$ is subtle as it influences the limit behaviour; it will be addressed later where we propose to use a dyadic squares partitioning.

The multiresolution criterion in the form of \eqref{mrc} was introduced by \cite{SieVe:1995} to detect change points; \cite{DaKo:2001} were first to use it for one-dimensional non-parametric regression, \cite{BiMaMu:2006, BiMaMu:2008a} applied it to positron emission tomography, while a similar criterion has been introduced by \cite{DueSpo:2001} as well as \cite{DueWa:2008} in the context of testing qualitative hypotheses in non-parametric regression. A major difference between \eqref{mrc} and the latter authors' multiscale statistic is that they calibrate the average residuals by a term of the order $\sqrt{\log(n / \sharp P})$ in order to enhance medium and large scales. In many imaging problems, however, the features on the small scales are most important as they reflect rapid local change at edges. This particularly results in a completely different limiting behaviour, compare Theorem~2 in \citep{DueWa:2008} with Theorem~\ref{thm_gen} below. \cite{SieWo:1995} test for the presence of a signal of unknown scale and position in arbitrary dimensions, using Gaussian weights in their multiresolution criterion; they derive its asymptotic distribution and discuss its power. \cite{DaMei:2008} used this criterion in one- and two-dimenstional settings to determine the weights of smoothing splines.

Although the $\omega_P$ are identically distributed they are dependent, rendering the distribution of $M_n$ difficult to obtain analytically. Recently, \cite{KaMu:2008b} established a.s. convergence, i.e.
\begin{equation}
\label{shao}
M_n \rightarrow \sigma \text{ a.s. for } n \rightarrow \infty
\end{equation}
for $\mathcal{P}_n$ the collection of all squares and rectangles; for the one-dimensional case and intervals this was shown by \cite{Shao:1995}. For the latter case, \cite{SieVe:1995} proved that $M_n$ is asymptotically Gumbel-distributed, cf. also \citep{Ka:2007}; \citep{SieYa:2000} suggests this also to hold in the multi-dimensional case. We will present a similar result for the collection of dyadic cubes in Theorem~\ref{thm_dyadic} on page~\pageref{thm_dyadic}. It will be based on the following, general theorem which is proved in the appendix:
\begin{theorem}
\label{thm_gen}
For each $N\in\N$, let $(\xi_1^{(N)},\ldots,\xi_N^{(N)})$ be a Gaussian vector with standardized marginal distributions. Suppose that for every $\eps>0$ and some constant $\rho<1$ not depending on $N$ we have
\begin{equation}\label{eq:sparse}
\#\{(i,j)\in\{1,\ldots,N\}^2: \Cov(\xi_{i}^{(N)},\xi_{j}^{(N)})\neq 0 \}=O(N^{1+\eps})\text{ as } N\to\infty
\end{equation}
and
\begin{equation}\label{eq:bounded_cov}
|\Cov(\xi_i^{(N)},\xi_j^{(N)})|\leq\rho \text{ provided that } i\neq
j.
\end{equation}
Then
\begin{equation}\label{eq:stat_theo}
\lim_{N\to\infty}\P\left[\max_{i=1,\ldots,N}\xi_i^{(N)}\leq a_N+b_N\tau\right]=\exp(-e^{-\tau})
\end{equation}
where $a_N$ and $b_N$ are sequences of constants defined by
\begin{equation}\label{eq:defab}
a_N=\sqrt{2\log N}+\frac{-1/2 \log\log N -\log 2\sqrt{\pi}  }{\sqrt{2\log N}},\qquad b_N=\frac{1}{\sqrt{2\log N}}.
\end{equation}
\end{theorem}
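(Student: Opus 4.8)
The plan is to reduce the dependent case to the independent one by a Gaussian comparison argument, and to absorb the dependence using the two hypotheses \eqref{eq:sparse} and \eqref{eq:bounded_cov}. Write $u_N=a_N+b_N\tau$ for the common threshold, so that $\{\max_i\xi_i^{(N)}\le u_N\}=\bigcap_{i=1}^N\{\xi_i^{(N)}\le u_N\}$. Let $(\eta_1^{(N)},\ldots,\eta_N^{(N)})$ be an independent standard Gaussian vector. The classical limit theorem for the maximum of i.i.d.\ standard Gaussians gives, with exactly the constants in \eqref{eq:defab},
\[\P\Big(\max_{i=1,\ldots,N}\eta_i^{(N)}\le u_N\Big)\longrightarrow\exp(-e^{-\tau}).\]
It therefore suffices to show that the correlations present in $\xi^{(N)}$ do not disturb this limit, i.e.\ that
\[\left|\,\P\Big(\max_i\xi_i^{(N)}\le u_N\Big)-\P\Big(\max_i\eta_i^{(N)}\le u_N\Big)\right|\longrightarrow 0.\]

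The main tool I would invoke is the normal comparison lemma (Berman's inequality). For two standardized Gaussian vectors with correlation matrices $(\sigma_{ij})$ and the identity, where $\sigma_{ij}=\Cov(\xi_i^{(N)},\xi_j^{(N)})$, the difference of the two lower-orthant probabilities at a common level $u$ is bounded by
\[\frac{1}{2\pi}\sum_{i<j}\frac{|\sigma_{ij}|}{\sqrt{1-\sigma_{ij}^2}}\exp\!\left(-\frac{u^2}{1+|\sigma_{ij}|}\right).\]
Only pairs with $\sigma_{ij}\neq 0$ contribute, and by \eqref{eq:sparse} there are $O(N^{1+\eps})$ of them. On each such pair, \eqref{eq:bounded_cov} gives $|\sigma_{ij}|\le\rho<1$, so that $(1-\sigma_{ij}^2)^{-1/2}\le(1-\rho^2)^{-1/2}$ and $\exp(-u^2/(1+|\sigma_{ij}|))\le\exp(-u^2/(1+\rho))$ uniformly.

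It then remains to insert the asymptotics of the threshold. From \eqref{eq:defab} one has $u_N^2=2\log N+O(\log\log N)$, whence $\exp(-u_N^2/(1+\rho))=N^{-2/(1+\rho)}(\log N)^{O(1)}$. Multiplying by the $O(N^{1+\eps})$ nonzero terms and the constant $(1-\rho^2)^{-1/2}/(2\pi)$, the comparison error is at most a constant times $N^{\,1+\eps-2/(1+\rho)}(\log N)^{O(1)}$. Since $\rho<1$ we have $2/(1+\rho)>1$, so the exponent $1+\eps-2/(1+\rho)$ is strictly negative as soon as $\eps<(1-\rho)/(1+\rho)$; because \eqref{eq:sparse} is assumed for \emph{every} $\eps>0$, I am free to fix such an $\eps$, and the error tends to $0$. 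Combined with the i.i.d.\ limit above, this yields \eqref{eq:stat_theo}.

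\textbf{Main obstacle.} The crux is the balance in the final step: the number of dependent pairs grows polynomially while each contributes an amount that is exponentially small in $u_N^2\sim 2\log N$, hence of order $N^{-2/(1+\rho)}$. The two hypotheses are precisely what makes the balance tip the right way --- sparsity \eqref{eq:sparse} keeps the count below $N^{1+\eps}$, and the strict bound $\rho<1$ in \eqref{eq:bounded_cov} keeps the per-pair decay faster than $N^{-1}$. Neither alone would suffice: a full $N^2$ pairs, or correlations allowed to approach $1$, would drive the exponent to be nonnegative and break the estimate. A secondary point to watch is the factor $(1-\sigma_{ij}^2)^{-1/2}$ in the comparison bound, which is harmless exactly because $\rho<1$ bounds it uniformly away from the singularity at $1$.
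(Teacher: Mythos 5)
Your proposal is correct and follows essentially the same route as the paper: apply Berman's (normal comparison) inequality against an i.i.d.\ standard Gaussian vector, bound the sum over the $O(N^{1+\eps})$ correlated pairs uniformly via $|\sigma_{ij}|\leq\rho<1$, and use $u_N^2\sim 2\log N$ to make the comparison error $o(1)$ for $\eps$ small enough. Your explicit computation of the exponent $1+\eps-2/(1+\rho)$ and the threshold $\eps<(1-\rho)/(1+\rho)$ is just a more detailed version of the paper's ``choosing $\eps$ small enough''.
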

\noindent
We note that condition~\eqref{eq:sparse} states that the covariance matrix of $(\xi_i^{(N)})_{i=1,\ldots,N}$ is ``sparse'', that is, it contains at most $O(N^{1+\eps})$ non-zero elements. Condition~\eqref{eq:bounded_cov} states that the covariance matrix has mutual coherence less than $1$, i.e. its off-diagonal elements are bounded away from $\pm 1$.

While the exact distribution of $M_n$ might not be available for more general collections $\mathcal{P}_n$ or in a finite setting, a critical value for testing the hypothesis can in principle be obtained through simulation.

We note that extensions to more general error distributions are possible, see \citep{Shao:1995}, \citep{KaMu:2008b} or \citep{NaSieYa:2008}. For the particular case of Poisson data with not too small intensities, we suggest to approximate these by Gaussian distributions: hypothesising that the data $y$ stem from intensities $\hat f$, compute residuals $r_{ij} = \hat f_{ij}^{-1/2} ( y_{ij} - \hat f_{ij} )$ which are approximately i.i.d. like Gaussian white noise with standard deviation $1$, see also Section~\ref{sec_num}.

\section{Data-driven choice of the diffusivity}
\label{sec_adapt}

We now return to the question of how to choose the diffusivities for the estimators defined in Section~\ref{sec_denois}. In view of the test criterion in \eqref{mrc}, we require that the conditions
\begin{equation}
\label{mrcineq}
\vert \omega_P \vert \leq \sigma \sqrt{\delta \log n^2}
\end{equation}
hold for all $P \in \mathcal{P}_n$. The asymptotics in \eqref{shao} suggest $\delta > 0$ to be chosen close to $2$, in such a way that the error of the first kind is below a certain, prespecified significance level $\alpha$, say $\alpha = 5\%$. If all these conditions are fulfilled, we cannot reject the hypothesis that the residuals are white noise, or equivalently that the reconstruction $\hat f$ is the true function $f$; thence we are to accept $\hat f$ as possibly being the true $f$.

An important property of the inequalities \eqref{mrcineq} to note is that they are only one-sided, i.e. they are only violated if the multiresolution coefficients $\omega_P$ get large in absolute value. This will happen if we oversmooth, using too large a diffusivity. If we smooth too little, using too small a diffusivity, however, then the residuals get too small and so do the $\omega_P$, and the criterion will not be violated. For example, estimating $f$ by $\hat{f} = y$ leads to all residuals $r_{ij}$ being $0$, and hence the multiresolution criterion is trivially fulfilled.
% The multiresolution criterion defined above can therefore only be used to detect oversmoothing.

Nonetheless, in many situations the collection $\mathcal{P}_n$ together with the candidate sets $\mathcal{F}_n$ constitute a nested sequence of increasing complexity when $\tau_n$ increases in \eqref{approxreg}; then it is reasonable to choose $\hat f_n \in \mathcal{F}_n$ such that the maximum of the left hand side of \eqref{mrcineq} is as close as possible to equality. \cite{BiMaMu:2006,BiMaMu:2008a} have investigated this strategy for stopping the EM algorithm in positron emission tomography. The reasoning behind is that equality is in fact obtained asymptotically as $n \rightarrow \infty$, cf. \eqref{shao} and Theorem~\ref{thm_dyadic}. Alternatively, one can choose the `simplest' function according to some complexity criterion, e.g. the TV-norm as in \eqref{eq:tvmin}.

\paragraph{Global choice.}
In light of the last remark, our strategy is to determine the largest diffusivity -- or the smoothest solution -- such that the inequalities \eqref{mrcineq} hold. For the homogeneous diffusion process \eqref{homdiffest}, this can be achieved algorithmically by starting with a large diffusivity $a \in \real^+$ which leads to oversmoothing, i.e. the corresponding estimator $\hat f_\text{hom.diff.}$ leads to residuals $r_{ij}$ that violate the multiresolution criterion \eqref{mrcineq}. Then, the diffusivity is reduced by a prespecified amount until the corresponding estimator gives residuals that fulfil the criterion. We refer to Section~\ref{sec_num} for more details.

\cite{DaKo:2001} have shown that such a strategy leads to consistent estimators when combined with the one-dimensional taut string, i.e. when the global smoothing parameter is given by the number of extreme values of the taut string estimator $\hat f:\real \rightarrow \real$; an application to the selection of peaks in X-ray diffractograms is given in \citep{DaGaMeiMeMi:2008}. \cite{BoLieMuWi:2007, BoKeMuLieWi:2009} proved a similar consistency result for selecting the number of jumps in jump regression. 
%\cite{BiMaMu:2006, BiMaMu:2008a} demonstrated how the one-dimensional multiresolution criterion can be used to select the stopping time of the EM algorithm in positron emission tomography.

\paragraph{Dyadic squares partitioning.}
We yet have to specify what collection $\mathcal{P}_n$ of subsets we will use; for the sake of brevity such collections will be called \emph{partitionings}. Obviously, they do not constitute what is commonly referred to as a \emph{partition} of a set, i.e. a disjoint composition of the latter, but they rather comprise several partitions at different scales as we shall see. Indeed, we want such a partitioning to consist of subsets that allow to detect deviations from the hypothesis at different resolutions, both at coarse and fine scales. However, it should not be chosen too rich either. From a statistical point of view, taking too many subsets results in too many multiple tests being performed which can be seen analytically from the asymptotics in \eqref{shao} breaking down. Also, for each subset $P \in \mathcal{P}_n$ its multiresolution coefficient $\omega_P$ needs to be determined in our iterative procedure, rendering a large partitioning $\mathcal{P}_n$ computationally very expensive.
Furthermore it is also necessary that the condition for each individual $P \in {\mathcal P}_n$ can be quickly checked. We now show how all these conditions can be met.

The subsequent theory is based on a \emph{dyadic squares partitioning}, whereas for most practical purposes we use more subtle partitionings such as wedgelets or curvelets, see below. The dyadic squares partitioning originates from \citep{Do:1997}, cf. also \citep{KoJuGo:2005} and \citep{AnBiSa:2009} for applications to image segmentation and classification, resp. It is obtained by splitting the image recursively into four equal subsquares until some pre-specified lowest scale is reached, cf. Figure~\ref{figdyad} (left). This partitioning covers a wide range of scales with comparatively few subsets. At the same time it allows fast computation of the multiresolution coefficients through cumulative sums: define the matrix $R$ of cumulative sums by
\begin{equation}
 R = \left( \sum_{k=1}^{i} \sum_{l=1}^{j} {r_{kl}} \right)_{ij}.
\end{equation}
From $R$, the multiresolution coefficient $\omega_P$ of a rectangle $P = [i_1,i_2] \times [j_1,j_2]$ can readily be obtained by
\begin{multline}
\label{rect}
\sum_{{(i,j)} \in P} {r(x_{ij})} = R_{i_2,j_2} - R_{i_1-1,j_2} \mathds{1}_{\left\{ i_1 > 1 \right\}} - R_{i_2, j_1-1} \mathds{1}_{\left\{ j_1 > 1 \right\}} \\+ R_{i_1-1,j_1-1} \mathds{1}_{ \left\{ i1 > 1, j1 > 1 \right\} }.
\end{multline}
% and
% \begin{equation}
%  \left| P \right| = (i_2 - i_1 + 1) \cdot (j_2 - j_1 + 1) \text{.}
% \end{equation}
% The way the sum of the residuals is calculated is illustrated in figure REF.

\begin{figure}[!tb]
\begin{center}
\includegraphics[width = 10 cm, height = 5cm]{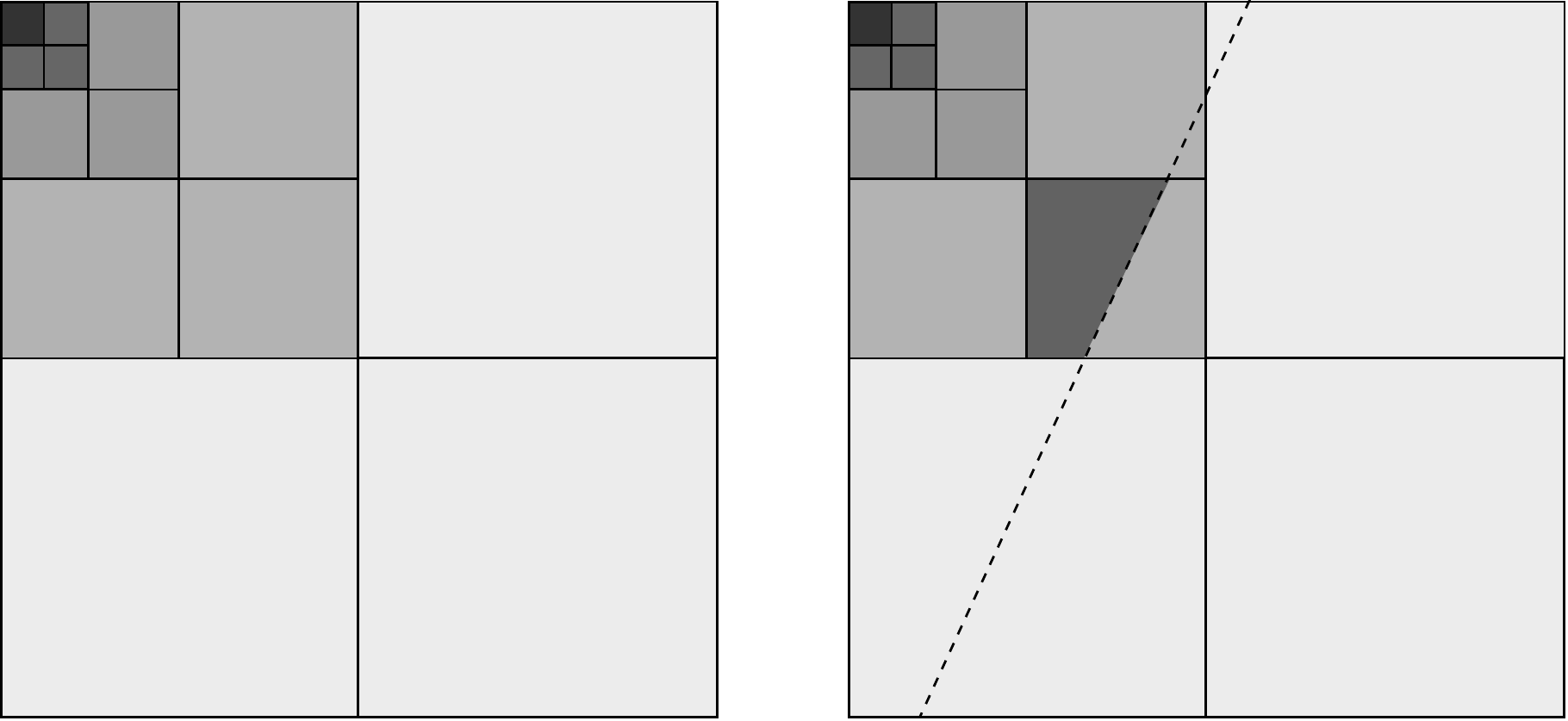}
\end{center}
\caption{\label{figdyad}Different scales of a dyadic squares partitioning (left), and with a line cutting through thereby creating wedgelets (right).}
\end{figure}

As promised in Section~\ref{sec_mrc}, we will now give the asymptotic distribution of $M_n$ in~\eqref{mrc} for a dyadic partitioning of a cube $K_n=\{1,\ldots,n\}^{d}$ in arbitrary dimension~$d$. First, let us recall the following well-known fact, see e.g.~\cite[Theorem 1.5.3]{LeLiRoo:1983}. If $\{\eta_i,i\in\N\}$ are independent standard Gaussian random variables, then for every $\tau\in\real$
\begin{equation}
\label{eq:equatmain}
\lim_{n\to\infty}\P\left [\max_{i=1,\ldots,n}\eta_i\leq a_n+b_n\tau \right ]=\exp(-e^{-\tau}),
\end{equation}
where the normalising sequences $a_n$, $b_n$ are defined in \eqref{eq:defab}.
As it turns out, we obtain the same asymptotic distribution, a \emph{Gumbel} distribution in the case of a dyadic partitioning:
\begin{theorem}
\label{thm_dyadic}
If $\mathcal{P}_n$ is the collection of dyadic subcubes of $K_n$ and $(r_i)_{i \in K_n}$ are independent standard Gaussian random variables, then for every $\tau\in\real$,
\begin{equation}
\label{eq:fish_tip}
\lim_{n\to\infty} \P[M_n\leq a_{\sharp\mathcal{P}_n}+b_{\sharp\mathcal{P}_n}\tau]=\exp(-e^{-\tau}),
\end{equation}
where
\begin{equation*}
M_n = \max_{P \in \mathcal{P}_n} \frac{\big\vert \sum_{i \in P} r_i \big\vert}{\sqrt{\sharp P}}.
\end{equation*}
\end{theorem}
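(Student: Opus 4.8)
The plan is to deduce Theorem~\ref{thm_dyadic} from the general Theorem~\ref{thm_gen} by taking the Gaussian vector $(\xi_i^{(N)})$ to be the family of multiresolution coefficients $\omega_P = (\sum_{i\in P} r_i)/\sqrt{\sharp P}$ indexed by the dyadic subcubes $P\in\mathcal{P}_n$. Each $\omega_P$ is standard Gaussian, so the marginals are already standardized as required. First I would reduce to $n=2^J$, so that the dyadic cubes of $K_n$ organize into scales $\ell=0,\ldots,J$, with $2^{\ell d}$ cubes of cardinality $2^{(J-\ell)d}$ at scale $\ell$; summing gives $N:=\sharp\mathcal{P}_n=\sum_{\ell=0}^{J}2^{\ell d}\sim \frac{2^d}{2^d-1}\,n^d$, so that $\log N\sim d\log n$ and $a_N,b_N$ are exactly the constants in the statement. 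General $n$ is then recovered by sandwiching between consecutive powers of two.

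The core computation is the covariance structure of the $\omega_P$: two dyadic cubes are either nested or disjoint. If $P$ and $Q$ are disjoint they involve independent blocks of the $r_i$, so $\Cov(\omega_P,\omega_Q)=0$; if $P\subsetneq Q$ then $\Cov(\omega_P,\omega_Q)=\sqrt{\sharp P/\sharp Q}$. With this I would verify the two hypotheses of Theorem~\ref{thm_gen}. For the coherence bound \eqref{eq:bounded_cov}: a strict nesting $P\subsetneq Q$ forces $\sharp Q\ge 2^d\sharp P$, so $|\Cov(\omega_P,\omega_Q)|\le 2^{-d/2}=:\rho<1$, uniformly in $n$. For the sparsity bound \eqref{eq:sparse}: the nonzero covariances of a fixed $P$ come only from its ancestors and descendants, so the total number of ordered nonzero pairs is $\sum_{\ell=0}^{J}2^{\ell d}\,O(2^{(J-\ell)d})=O\big((J+1)2^{Jd}\big)=O(n^d\log n)=O(N\log N)$, which is $O(N^{1+\eps})$ for every $\eps>0$. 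Thus the one-sided maximum $\max_{P}\omega_P$ already satisfies the conclusion of Theorem~\ref{thm_gen}.

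The remaining and genuinely delicate step is to pass from this one-sided maximum to the two-sided statistic $M_n=\max_P|\omega_P|$. The natural device is to apply the theorem to the doubled family $\{\omega_P\}\cup\{-\omega_P\}$, whose maximum is exactly $M_n$ and whose covariances, away from the sign-flipped pairs, still obey \eqref{eq:sparse} and \eqref{eq:bounded_cov}. The obstacle is precisely the pairs $(\omega_P,-\omega_P)$, which have covariance $-1$ and so violate \eqref{eq:bounded_cov}. These perfectly anti-correlated pairs can never simultaneously produce a large positive value, so heuristically they create no clustering; making this rigorous is the heart of the matter. I would handle it by writing $\{M_n\le u\}=\{\max_P\omega_P\le u\}\cap\{\max_P(-\omega_P)\le u\}$ and establishing asymptotic independence of the positive and negative extremes via a normal-comparison (Berman-type) estimate, controlling $\sum_{P,Q}|\Cov(\omega_P,-\omega_Q)|\exp(-u^2/(1+|\Cov|))$; the only place where this is not negligible is the diagonal $P=Q$, whose $O(1)$ contribution is exactly what fixes the centering.

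The hard part, therefore, will not be the covariance bookkeeping but carrying out this comparison and then reconciling the normalizing constants: the doubling naturally brings in $a_{2N}$ and $b_{2N}$, and relating these to the stated normalization through $a_{2N}=a_N+b_N\log 2+o(b_N)$ and $b_{2N}\sim b_N$ is what must be done carefully to land precisely on \eqref{eq:fish_tip}. Everything else reduces cleanly to Theorem~\ref{thm_gen}.
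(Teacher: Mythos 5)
Your covariance bookkeeping is correct and is essentially the paper's own argument: the paper likewise applies Theorem~\ref{thm_gen} with $N=\sharp\mathcal{P}_n$ to the signed coefficients $\omega_P=\bigl(\sum_{i\in P}r_i\bigr)/\sqrt{\sharp P}$, verifies \eqref{eq:bounded_cov} via the ``nested or disjoint'' dichotomy (it uses the bound $1/\sqrt{2}$ where you use the sharper $2^{-d/2}$), and verifies \eqref{eq:sparse} by the same ancestor/descendant count $O(n^d\log n)=O(N^{1+\eps})$. Your reduction to $n=2^J$ plus ``sandwiching'' is unnecessary --- the counts work verbatim for all $n$, as in the paper --- and the sandwiching step would in any case be awkward to justify, since the normalizing constants move with $n$; better to drop it.

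The real point of divergence is the absolute value, and there you have put your finger on a genuine gap --- but one that sits in the paper, not only in your sketch. The paper's proof stops after checking the two hypotheses of Theorem~\ref{thm_gen}, so it only establishes the Gumbel limit for the one-sided maximum $\max_P\omega_P$; the statistic $M_n=\max_P|\omega_P|$ is never addressed. Your doubling device is the natural repair (or, more cleanly, apply a two-sided normal comparison inequality directly to the events $\{|\omega_P|\le u\}$, which avoids the correlation~$-1$ pairs entirely because only pairs $P\neq Q$ then enter the comparison sum). However, carried out honestly it cannot ``land precisely on \eqref{eq:fish_tip}'': comparing $M_n$ with the maximum of $2N$ independent standard Gaussians and using your own expansion $a_{2N}=a_N+b_N\log 2+o(b_N)$, one gets
\begin{equation*}
\lim_{n\to\infty}\P\bigl[M_n\le a_{\sharp\mathcal{P}_n}+b_{\sharp\mathcal{P}_n}\tau\bigr]=\exp(-2e^{-\tau}),
\end{equation*}
a Gumbel law with an extra location shift of $\log 2$. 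Concretely, $\P[|\eta|>u]=2\,\P[\eta>u]$ for a standard Gaussian $\eta$, so with $u=a_N+b_N\tau$ one has $N\,\P[|\eta|>u]\to 2e^{-\tau}$ rather than $e^{-\tau}$; and the perfectly anti-correlated diagonal pairs contribute nothing to the comparison sum (their Berman term vanishes as the correlation tends to $-1$), so they do not ``fix the centering'' as you hope --- the factor $2$ survives. In short: the theorem as printed is exactly what Theorem~\ref{thm_gen} gives for $\max_P\omega_P$ without absolute values (which is what the paper's proof actually proves), whereas for $M_n$ the constants must be $a_{2\sharp\mathcal{P}_n},b_{2\sharp\mathcal{P}_n}$, equivalently the limit must read $\exp(-2e^{-\tau})$. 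Your plan, completed, proves that corrected statement; it cannot prove \eqref{eq:fish_tip} verbatim.
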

\noindent
We note that other ``dyadic-type'' collections of scanning subsets (say, $p$-adic cubes, $p=2,3,\ldots$) will also lead to this asymptotic distribution, as can be seen from the proof in the appendix.

\paragraph{Local smoothing.}
If we want to allow the diffusivity to vary locally as for the inhomogeneous diffusion introduced in Section~\ref{sec_denois}, we have to modify the approach taken for the homogeneous diffusion above. It is the \emph{locality} of the multiresolution criterion that allows for such a modification: if the criterion is violated, we can not only conclude that the diffusivity was too high but from \eqref{mrcineq} we can also infer \emph{where}. Accordingly, each time a violation occurs for a multiresolution coefficient $\omega_P$ of some $P \in \mathcal{P}_n$, we reduce the local smoothing parameter $a(x)$ on that particular subset $P$ only, keeping the diffusivity at its current value in the rest of the image.

As for the homogeneous diffusion, we start by initializing $a(x)$ to a large constant. Then we compute the estimator and its residuals, check the multiresolution criterion and adapt the smoothing parameter locally. This strategy has first been introduced by \cite{Mei:2004}. Obviously, if the hypothesis is violated on a small subset $P$, we can \emph{logically} conclude that the hypothesis must also be violated on any superset $P' \supset P$. Note that this is not to say that if the \emph{empirical criterion} is violated on a subset that it will also be violated on all supersets -- this is clearly not true in general; however, if we decide against the hypothesis on such a subset we cannot accept the hypothesis on any superset. Hence we start by checking the multiresolution criterion on the smallest scale, considering larger sets only if no subset has shown a violation yet, thereby avoiding the diffusivity being reduced several times at the same spot. We then iterate the process until no further violations of the inequalities in \eqref{mrcineq} are detected, giving our final estimate.

Note that this results in the diffusivity being piecewise constant on the dyadic squares partitioning. In order to be able to adapt it to finer geometric features, we enhance the latter by adding so-called wedgelets. These are obtained by dividing a dyadic square into two parts by a straight line. We draw a certain set of lines through the image domain and add the resulting wedgelets to the partitioning. This is illustrated in Figure~\ref{figdyad} (right). For a detailed description of \emph{wedgelet partitionings} we refer to \citep{Do:1999} and \citep{Frie:2005}. Fast computation of the multiresolution coefficients of wedgelets can be done by a similar though somewhat more involved argument as in \eqref{rect}; for details we refer to \citep{FrieDeFueWi:2007}.

We use the wedgelet partitioning only to increase the flexibility when updating the diffusivity: if a violation is detected on a dyadic square $P$, all wedgelets subdividing that square are considered. If the criterion is fulfilled for all wedgelets contained in $P$ the smoothing parameter will be reduced on $P$. Otherwise, the smoothing parameter will be reduced on the wedgelet $W$ which yields the largest absolute value of the multiresolution coefficient $\omega_W$. Again, all supersets of $P$ will be ignored in that iteration. The final algorithm is described in Figure~\ref{algo}.

\begin{figure}[!tbp]
\framebox[\textwidth][c]{
\begin{minipage}{\textwidth}
\begin{algorithmic}
\newlength{\bull}\settowidth{\bull}{$\cdot$}
\newlength{\bif}\settowidth{\bif}{\textbf{if}}
\STATE $\cdot$ choose some $\delta > 0$, e.g. such that the error of the first kind is $5\%$\newline\hspace*{\bull} which can be obtained from simulations
\STATE $\cdot$ determine $\sigma$
\STATE $\cdot$ initialize $a:\{1, \dots, n\}^2 \rightarrow \real^+$ to a large constant
\LOOP
\STATE $\cdot$ obtain current reconstruction $\hat f(\cdot) = L_a y$, see \eqref{inhomdiffest}
\STATE $\cdot$ compute residuals $r_{ij} = y_{ij} - \hat f_{ij}$
\STATE $\cdot$ for each dyadic square $P$ determine its multiresolution coefficient
\begin{equation}
\omega_P = \frac{1}{\sqrt{\sharp\{ x_{ij} \in P\}}} \sum_{x_{ij} \in P} r_{ij}
\end{equation}
\IF{there is some $P$ on which the criterion is violated,\newline\hspace*{\bif} i.e. having $\vert \omega_P \vert > \sigma \sqrt{\delta \log n^2}$,}
\FORALL{such $P$}
\IF{there is some subset $P' \subset P$ with a violation}
\STATE $\cdot$ do nothing
\ELSE
\STATE $\cdot$ determine the multiresolution coefficients $\omega_W$ of all wedgelets\newline\hspace*{\bull} $W$ comprising this square
\IF{$\max_W \vert \omega_W \vert > \sigma \sqrt{\delta \log n^2}$, i.e. there is some wedgelet with a \newline\hspace*{\bif} large coefficient,}
\STATE $\cdot$ reduce $a$ on the wedgelet $W = \argmax_W \omega_W$
\ELSE
\STATE $\cdot$ reduce $a$ on $P$
\ENDIF
\ENDIF
\ENDFOR
\ELSE
\STATE $\cdot$ stop and return the current estimate $\hat f$, i.e. the first estimate\newline\hspace*{\bull}  without a violation
\ENDIF
\ENDLOOP
\end{algorithmic}
\end{minipage}
}
\caption{\label{algo}
Algorithm to determine local diffusivity and hence the reconstructed image.}
\end{figure}

For this procedure to be implemented, we need an estimator of the noise's standard deviation $\sigma$. Based on the normality assumption, and taking the small number of pixels at sharp edges into account, we use a robust estimator based on the median of absolute differences, namely
\begin{equation}
\label{sigmahat}
\hat\sigma = \frac{1}{2 \Phi^{-1}(0.75)} \median\big\{ \vert y_{i,j} - y_{i-1,j} - y_{i,j-1} + y_{i-1,j-1} \vert\ :\ i, j = 2, \dots, n \big\},
\end{equation}
where $\Phi$ denotes the cumulative distribution function of $\mathcal{N}(0, 1)$, cf. \citep{DaKo:2001}. For smooth signals, polynomially weighted, difference-based estimators might be more appropriate, see \citep{MuBiWaFrei:2005}; note that our estimator is indeed unbiased if $f$ is affine.

The algorithms described in this section clearly do not depend on the kind of penalty used, cf. \eqref{varinhomdiff}. All that is needed is a global or local smoothness parameter $a$, corresponding to the homogeneous or inhomogeneous diffusivity. In particular, we proceed in the same way for the TV regularization in \eqref{vartv}. See \citep{DaMei:2008} for an application of this approach to one- and two-dimensional weighted smoothing splines.

\section{Numerical details and results}
\label{sec_num}

In this section we show simulations where the noiseless image $f$ of \eqref{model} is the $256 \times 256$ pixel test image shown in Figure~\ref{figdata}(a), whose values spread over $[0,5]$. To this we added Gaussian noise according to \eqref{noise} with $\sigma = 1$, resulting in a signal-to-noise ratio of $5$, see Figure~\ref{figdata}(b).

In each iteration, the smoothing parameter was reduced where necessary through multiplication with a factor $\lambda < 1$ as described in Section~\ref{sec_adapt}. In order to speed up the algorithm we chose $\lambda$ according to the size of the multiresolution coefficient: the larger $\omega_P$, the smaller $\lambda$ was chosen, taking care not too quickly to reduce the diffusivity which would result in undersmoothing. Note that the inhomogeneous diffusion estimator can be found efficiently by solving \eqref{inhomdiffest} since the discretized Laplacian is given by a sparse band matrix, rendering Gauss-Seidel iterations an appropriate solution method.

The algorithms were implemented in Matlab in a modular fashion, thus allowing to use any localized regression method as discussed in Section~\ref{sec_denois}. Figure~\ref{figdiff} shows results for homogeneous and inhomogeneous diffusion. Clearly, in order to reconstruct fine details and sharp edges well, such that the multiresolution conditions \eqref{mrcineq} are satisfied, a very small diffusivity ($a = 0.934$) is needed when chosen globally; this results in considerable undersmoothing elsewhere in Figure~\ref{figdiff}(a). Choosing the diffusivity locally resolves this difficulty, allowing e.g. to strongly smooth large areas of the background or the circle in the upper left while not compromising small level details like the dots in the lower right or any sharp edges, see Figures~\ref{figdiff}(b) and~\ref{cut}(a). This behaviour of the local diffusivity as an edge-detector is very much apparent in Figure~\ref{figpar}(a).

\begin{figure}[!tb]
\begin{center}
\setlength{\tabcolsep}{2mm}
\begin{tabular}{ccl}
\includegraphics[width=150pt]{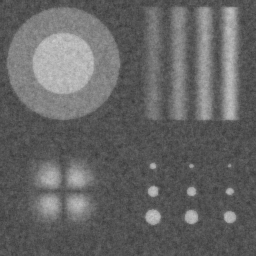}
&
\includegraphics[width=150pt]{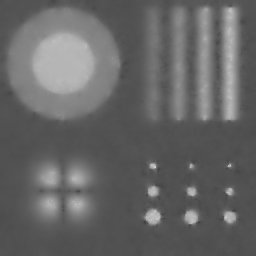}
&
\includegraphics[height=150pt]{sn5_pal.pdf}
\\
(a) & (b) &
\vspace*{-12pt}
\end{tabular}
\end{center}
\caption{\label{figdiff}Results of homogeneous (a) and inhomogeneous (b) diffusion.}
\end{figure}

For comparison, we also show results for the total-variation (TV) regularization introduced in \eqref{vartv}, again both for global and local smoothing parameters; see Figure~\ref{figtv}(a) and (b), respectively. The difficulty when choosing a global smoothing parameter is again well visible but the local smoothing parameter appears to have more difficulty adapting to the test object, permitting considerable undersmoothing on the larger dots in the lower right, cf. Figures~\ref{figpar}(b) and~\ref{cut}(b). The TV penalty was implemented by approximation with a differentiable functional, as described e.g. by \cite{Vo:2002}.

\begin{figure}[!tb]
\begin{center}
\setlength{\tabcolsep}{2mm}
\begin{tabular}{ccl}
\includegraphics[width=150pt]{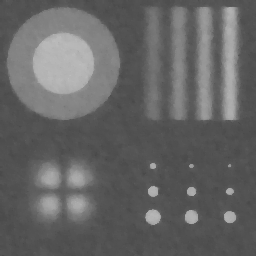}
&
\includegraphics[width=150pt]{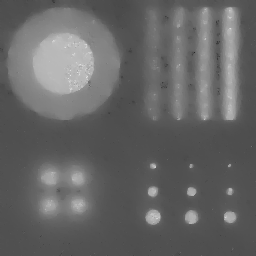}
&
\includegraphics[height=150pt]{sn5_pal.pdf}
\\
(a) & (b) &
\vspace*{-12pt}
\end{tabular}
\end{center}
\caption{\label{figtv}Results for TV regularization with global (a) and local (b) choice of the smoothing parameter.}
\end{figure}

\begin{figure}[!tb]
\begin{center}
\setlength{\tabcolsep}{2mm}
\begin{tabular}{cc}
\includegraphics[width=150pt]{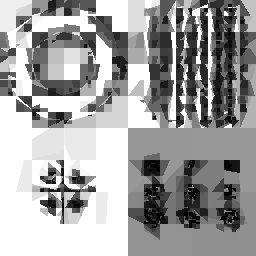}
\ \includegraphics[height=150pt]{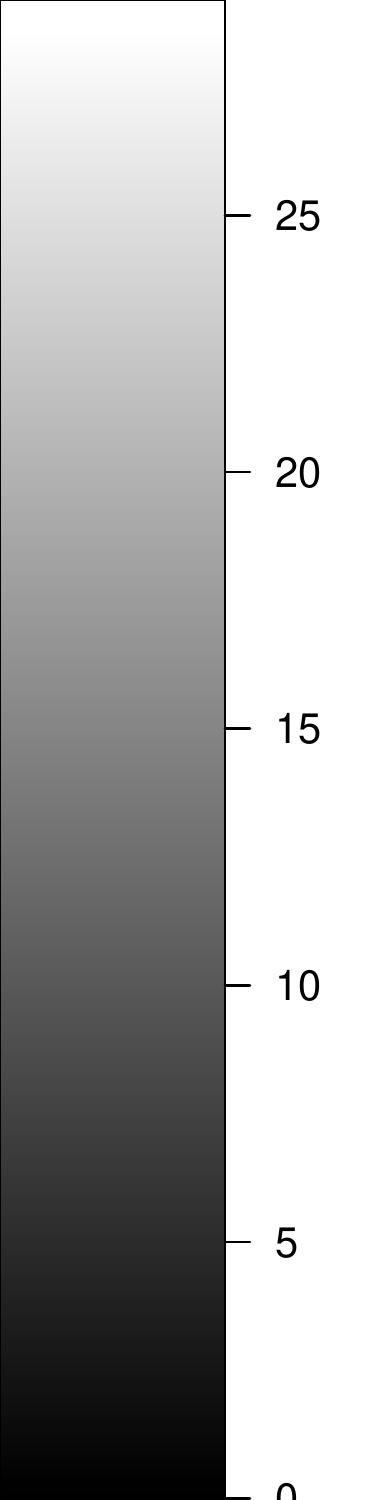}
&
\includegraphics[width=150pt]{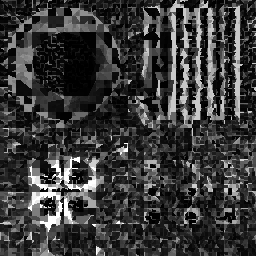}
\ \includegraphics[height=150pt]{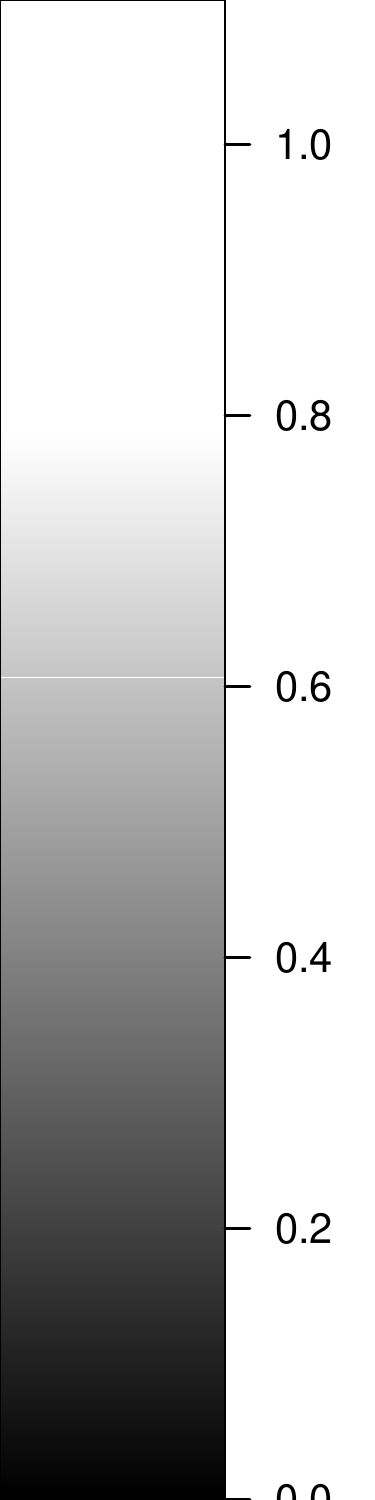}
\\
(a) & (b)
\vspace*{-12pt}
\end{tabular}
\end{center}
\caption{\label{figpar}Local smoothing parameters for diffusion (a) and TV penalty (b).}
\end{figure}

\begin{figure}[!tb]
\hspace*{-15mm}
\includegraphics[width=0.6\textwidth]{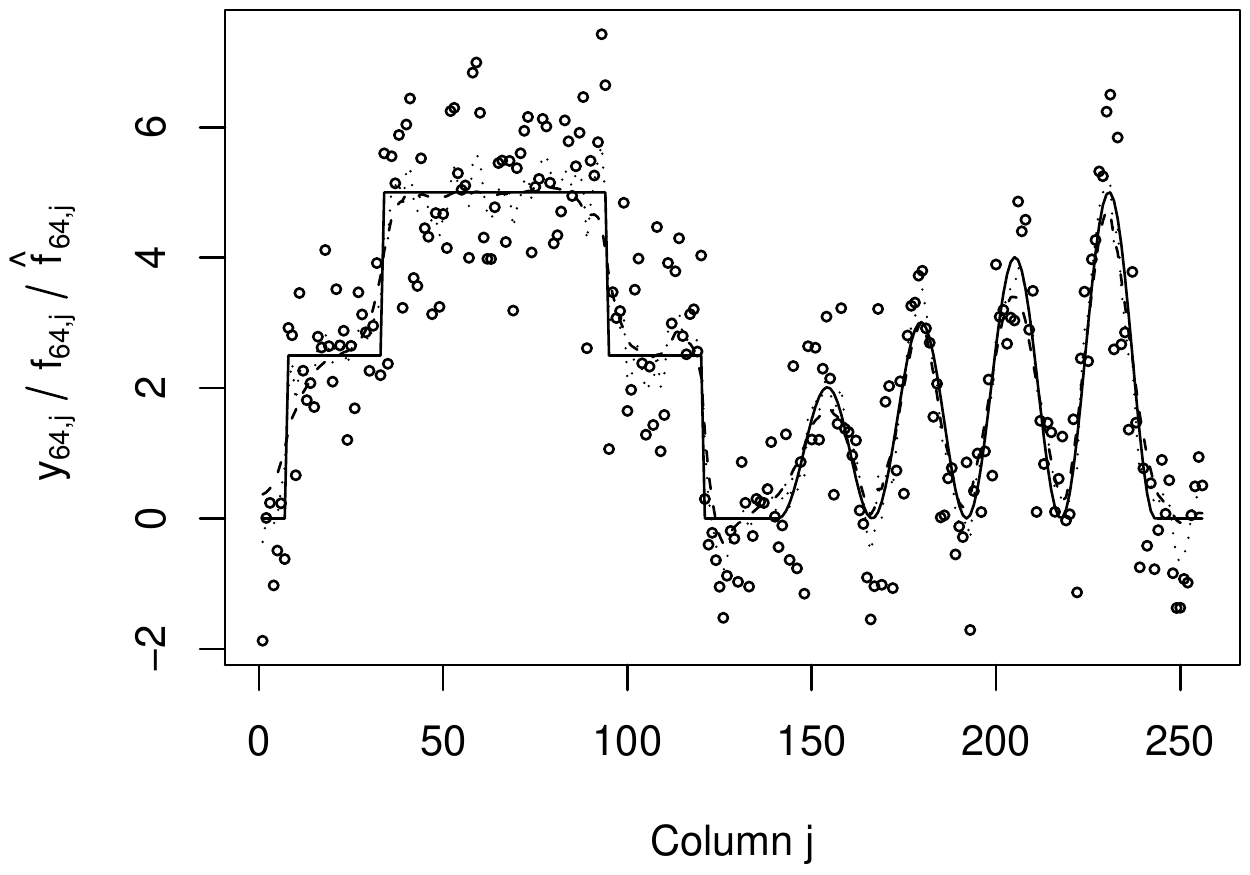}%
~~%
\includegraphics[width=0.6\textwidth]{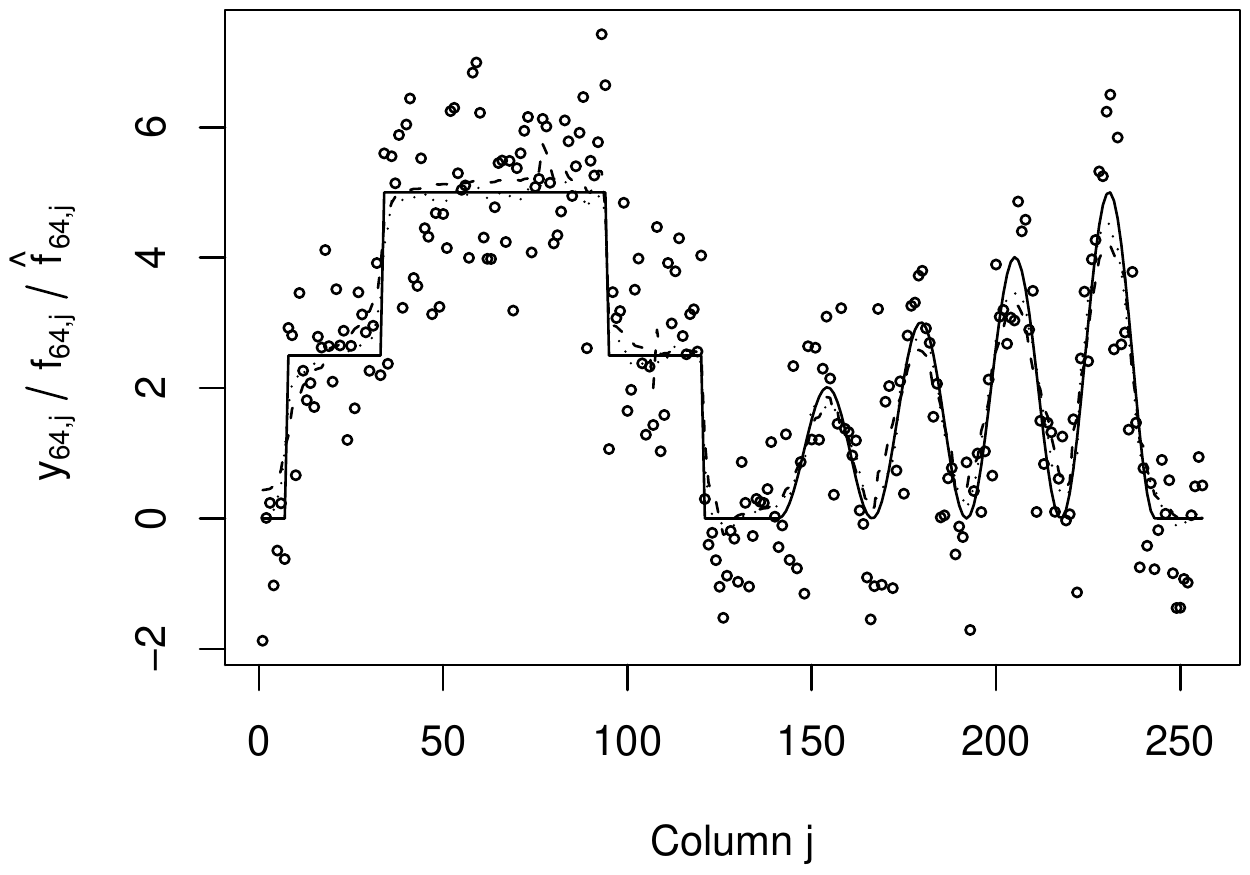}
\newline
\hspace*{26mm} (a) \hspace*{67mm} (b)

\caption{\label{cut}Cut along row 64 indicated in Figure~\ref{figdata}(a) for diffusion (a) and TV penalty (b); solid lines gives the true $f$, points the data $y$, dotted lines correspond to a global and dashed lines to a local choice of the smoothing parameter.}
\end{figure}

When the signal-to-noise ratio is reduced to 2, see the simulated data in Figure~\ref{noisier}(a), the reconstruction gets smoothed more strongly (b), since the residuals no longer carry enough statistically significant information to allow the oversmoothing of the smaller features to be detected, though still the only feature  not distinguishable from the noise is the smallest of the nine circles in the lower right.

\begin{figure}[!tb]
\begin{center}
\setlength{\tabcolsep}{2mm}
\begin{tabular}{ccl}
\includegraphics[width=150pt]{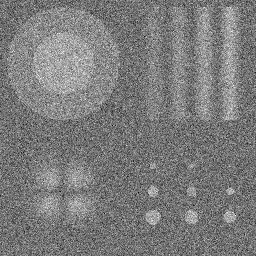}
&
\includegraphics[width=150pt]{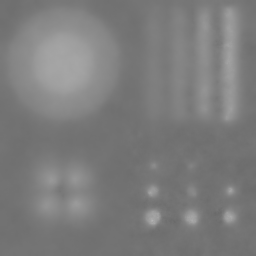}
&
\includegraphics[height=150pt]{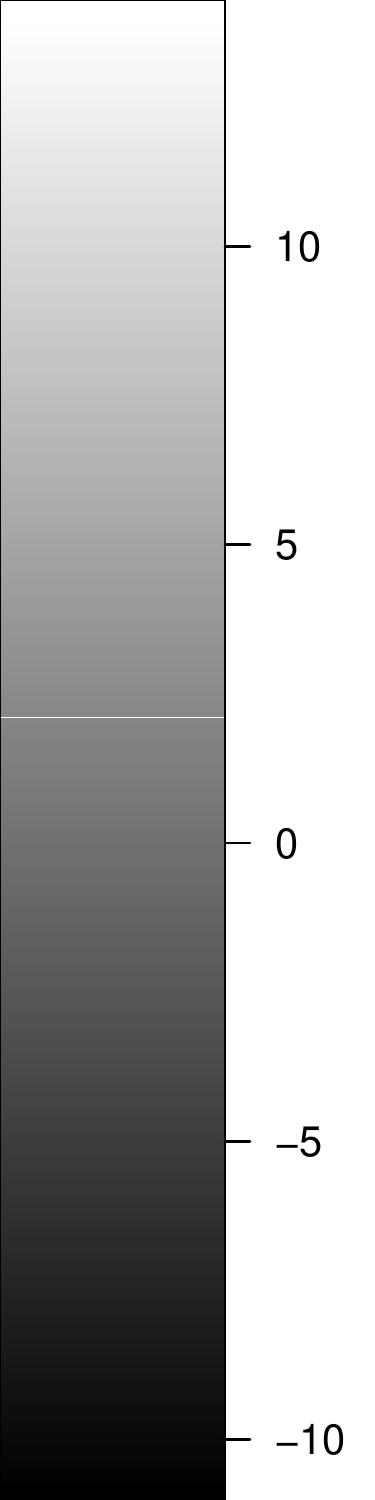}
\\
(a) & (b) &
\vspace*{-12pt}
\end{tabular}
\end{center}
\caption{\label{noisier}(a) Simulated data $y$ with noise level $\sigma = 2.5$. (b) Corresponding reconstruction using inhomogeneous diffusion.}
\end{figure}

Poisson data with high intensities can also be treated with this methodology as remarked at the end of Section~\ref{sec_mrc}; instead of determining a constant variance at the beginning of the algorithm, one uses the local variance predicted by the reconstruction to normalize the residuals. We also simulated this situation for intensities within $[50, 100]$, i.e. again with a signal-to-noise ratio of $5$, see Figure~\ref{poisson}. The reconstruction demonstrates the applicability of our approach also in this situation.

\begin{figure}[!tb]
\begin{center}
\setlength{\tabcolsep}{2mm}
\begin{tabular}{ccl}
\includegraphics[width=150pt]{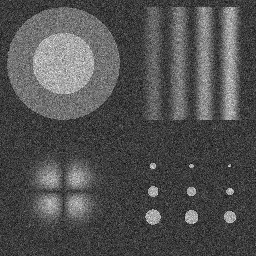}
&
\includegraphics[width=150pt]{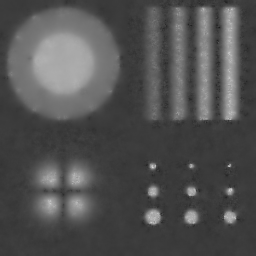}
&
\includegraphics[height=150pt]{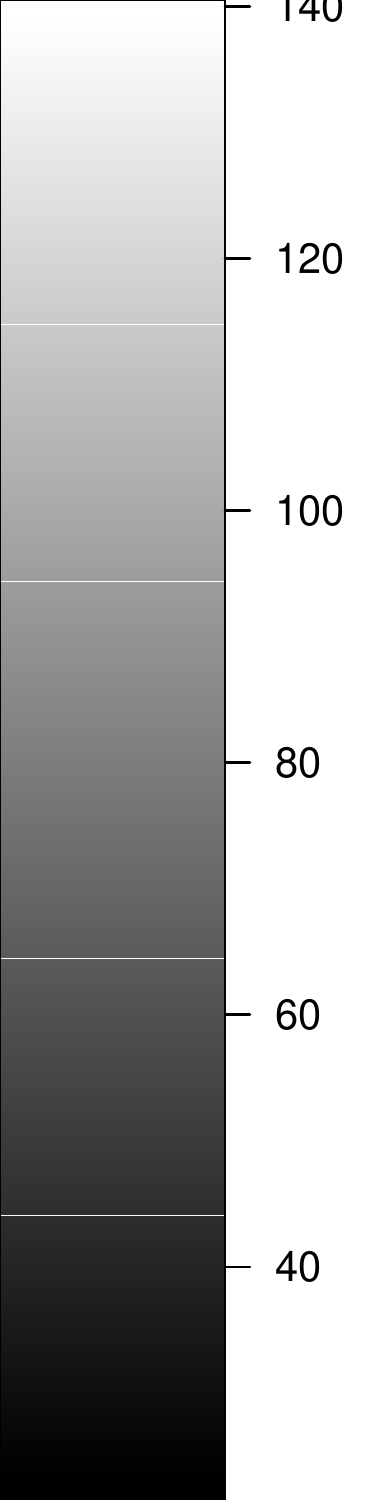}
\\
(a) & (b) &
\vspace*{-12pt}
\end{tabular}
\end{center}
\caption{\label{poisson}(a) Simulated Poisson data $y$ with intensities in $[50, 100]$. (b) Corresponding reconstruction using inhomogeneous diffusion.}
\end{figure}

We conclude this section with an application where the image has been obtained by a CCD camera attached to a confocal microscope, see Figure~\ref{quantum}, data courtesy of Emre Togan, Department of Applied Physics, Harvard University. The data set shows photoluminescence in a diamond sample, where high photoluminescence marks the so-called nitrogene-vacancy centres in the diamond. These are of major interest to quantum information science where they have been proposed as qubits, i.e. for storage, as they form a solid-state system whose spins can be manipulated at room temperature. This application therefore aims at removing noise from the image in order to aid the researcher in detecting these nitrogene-vacancy centres, such that subsequent experiments can be conducted on them, see \citep{DCJTMJZHL:2007} and the references therein. We note that our reconstruction reduces the noise considerably while keeping small-scale features, much to the satisfaction of the physicists involved.

\begin{figure}[!tb]
\begin{center}
\setlength{\tabcolsep}{2mm}
\begin{tabular}{ccl}
\includegraphics[width=150pt]{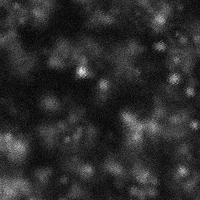}
&
\includegraphics[width=150pt]{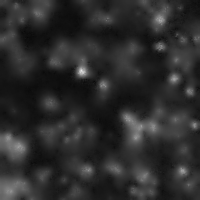}
&
\includegraphics[height=150pt]{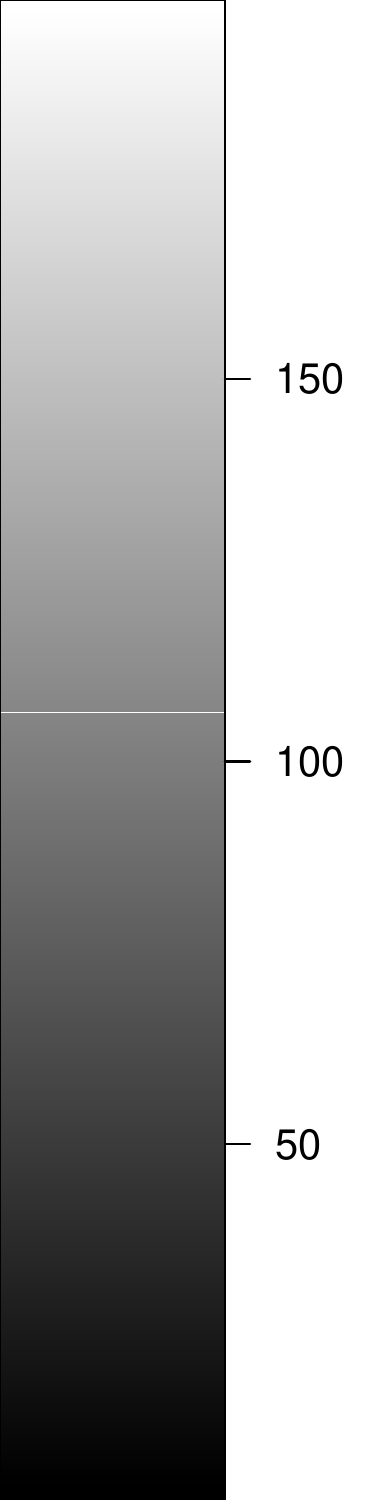}
\\
(a) & (b) &
\vspace*{-12pt}
\end{tabular}
\end{center}
\caption{\label{quantum}(a) Poisson data of photoluminescence. (b) Corresponding reconstruction using inhomogeneous diffusion.}
\end{figure}

\section{Discussion}
\label{sec_discuss}

We have demonstrated that the statistical multiresolution criterion allows to determine both global and local smoothing parameters in a fully data-driven procedure. For inhomogeneous diffusion, it acts as an edge-detector -- quite as expected: at edges, only very little smoothing is to be permitted for the criterion to be fulfilled there, while large regions that are well approximated by their first order Taylor series expansion may be smoothed strongly.

It appears that using a penalty that is more capable of dealing with sharp edges, like TV regularization, does not improve upon the inhomogeneous diffusion, or even performs worse. A possible explanation is that the multiresolution criterion itself is able to detect edges, thus eliminating the major drawback of the diffusion process -- whilst keeping its advantageous properties in smooth areas. TV regularization with a global parameter, however, already is capable of reconstructing sharp edges and can localize them, thus it cannot benefit as much from the multiresolution criterion's ability to choose the parameter locally, especially as the smoothness of the reconstruction cannot be improved.

We emphasize once again that the multiresolution analysis can be extended to other error distributions as well, see Section~\ref{sec_mrc}. Also, generalizations to higher dimensions are possible; in particular, three-dimensional images or movies could be treated efficiently, too.

% A drawback of the method described here, though, is that undersmoothing cannot be detected. We have tried to circumvent this difficulty by reducing the smoothing parameter very slowly such that we end up with the smoothest solution fulfilling the multiresolution criterion. In practice, however, this can result in long computation times. We expect major speed-ups if one can find a statistically sound multiresolution criterion that is able to detect undersmoothing.

\ifthenelse{\boolean{blind}}{}{
\section*{Acknowledgements}

T.~Hotz and P.~Marnitz gratefully acknowledge support by the German Federal Ministry of Education and Research, Grant 03MUPAH6. A.~Munk and Z.~Kabluchko acknowledge support by the German Research Foundation's FOR 916, and A.~Munk also by SFB 755.
}

\appendix

\section{Proofs}

The proof of Theorem~\ref{thm_gen} on page~\pageref{thm_gen} will be based on the following lemma which is known as Berman's Inequality, see e.g. \cite[Theorem 4.2.1]{LeLiRoo:1983}.
\begin{lemma}%(Berman Inequality)
Let $(\xi_1,\ldots,\xi_N)$  be a Gaussian vector with standard margins and covariance matrix $(\rho_{ij})_{i,j=1,\ldots,N}$, and let $\eta_1,\ldots,\eta_N$ be independent standard Gaussian variables. Then, for every $u>0$,
\begin{align*}
\P\left[\max_{1\leq i\leq N}\xi_i\leq u\right]-\P\left[\max_{1\leq i\leq N}\eta_i\leq u\right] \leq
&\frac 1 {2\pi} \sum_{1\leq i<j\leq N}\frac{|\rho_{ij}|}{\sqrt{1-\rho_{ij}^2}} \exp\left(-\frac{u^2}{1+\rho_{ij}}\right).
\end{align*}
\end{lemma}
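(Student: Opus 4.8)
The plan is to prove the inequality by the classical Gaussian interpolation (Slepian--Plackett) technique, comparing the dependent vector $(\xi_1,\dots,\xi_N)$ with the independent vector $(\eta_1,\dots,\eta_N)$ along a continuous path of covariance matrices. For $h\in[0,1]$ set $\Sigma^{(h)}=(1-h)I+h\,(\rho_{ij})$; this has unit diagonal, off-diagonal entries $h\rho_{ij}$, and is positive definite for $h<1$ since its eigenvalues are at least $1-h$, with $\Sigma^{(0)}=I$ and $\Sigma^{(1)}=(\rho_{ij})$. Writing $\phi_{\Sigma^{(h)}}$ for the associated centred Gaussian density and
\[
F(h)=\P\!\left[\xi_1^{(h)}\le u,\dots,\xi_N^{(h)}\le u\right]=\int_{(-\infty,u]^N}\phi_{\Sigma^{(h)}}(x)\,dx,
\]
the left-hand side of the lemma is exactly $F(1)-F(0)=\int_0^1 F'(h)\,dh$, so everything reduces to controlling $F'(h)$. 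The boundary case $h=1$, where $(\rho_{ij})$ may be only semidefinite, can be handled by continuity, approximating $(\rho_{ij})$ by $(\rho_{ij})+\delta I$ and letting $\delta\downarrow0$.

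The key analytic tool is Plackett's identity: the multivariate Gaussian density satisfies the diffusion-type equation $\partial\phi_\Sigma/\partial\sigma_{ij}=\partial^2\phi_\Sigma/\partial x_i\,\partial x_j$ for $i\ne j$. Differentiating $F$ under the integral sign and using $d\sigma_{ij}^{(h)}/dh=\rho_{ij}$ together with this identity gives
\[
F'(h)=\sum_{1\le i<j\le N}\rho_{ij}\int_{(-\infty,u]^N}\frac{\partial^2\phi_{\Sigma^{(h)}}}{\partial x_i\,\partial x_j}(x)\,dx.
\]
For each pair I would integrate out $x_i$ and $x_j$ by the fundamental theorem of calculus; since $\phi$ and its first derivatives vanish at $-\infty$, the only surviving boundary contribution is the density evaluated at $x_i=x_j=u$, leaving
\[
\int_{(-\infty,u]^N}\frac{\partial^2\phi_{\Sigma^{(h)}}}{\partial x_i\,\partial x_j}\,dx=\int_{(-\infty,u]^{N-2}}\phi_{\Sigma^{(h)}}(x)\Big|_{x_i=x_j=u}\prod_{k\ne i,j}dx_k\ \ge 0.
\]
This quantity factors as the bivariate marginal density $p^{(h)}_{ij}(u,u)$ of $(\xi_i^{(h)},\xi_j^{(h)})$, whose correlation is $h\rho_{ij}$, times the conditional probability that all remaining coordinates stay below $u$; the latter is at most $1$, so the pairwise integral is bounded above by
\[
p^{(h)}_{ij}(u,u)=\frac{1}{2\pi\sqrt{1-(h\rho_{ij})^2}}\exp\!\left(-\frac{u^2}{1+h\rho_{ij}}\right).
\]

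To finish I would split the sum according to the sign of $\rho_{ij}$. Because the pairwise integral is nonnegative, any pair with $\rho_{ij}<0$ contributes a nonpositive term to $F'(h)$ and may simply be discarded, the right-hand side of the lemma being nonnegative. For a pair with $\rho_{ij}\ge0$ I bound $\rho_{ij}$ times the pairwise integral by $|\rho_{ij}|\,p^{(h)}_{ij}(u,u)$ and use that, for fixed $u>0$, the map $r\mapsto p_r(u,u)$ is increasing on $[0,1)$ (both $1/\sqrt{1-r^2}$ and $\exp(-u^2/(1+r))$ increase in $r$), so $p^{(h)}_{ij}(u,u)\le p_{\rho_{ij}}(u,u)$ for all $h\in[0,1]$. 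Integrating over $h\in[0,1]$, which contributes a factor $1$, then yields term by term exactly $\tfrac{1}{2\pi}|\rho_{ij}|(1-\rho_{ij}^2)^{-1/2}\exp(-u^2/(1+\rho_{ij}))$, and summing over $i<j$ gives the claim. The main obstacle is rigour at the two soft spots: justifying differentiation under the integral and the Plackett identity itself (which I would derive from the Gaussian characteristic function, where $\partial_{\sigma_{ij}}$ and $\partial^2_{x_ix_j}$ both act as multiplication by $-t_it_j$), together with the dominated-convergence and decay estimates needed to push the boundary terms to $-\infty$ and to treat the semidefinite endpoint $h=1$; the sign-and-monotonicity bookkeeping at the end is then routine.
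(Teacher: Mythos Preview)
The paper does not actually prove this lemma: it is stated without proof and attributed to Leadbetter, Lindgren and Rootz\'en as Berman's Inequality (their Theorem~4.2.1), and is then used as a black box in the proof of Theorem~\ref{thm_gen}. Your Slepian--Plackett interpolation argument is precisely the standard route to this normal comparison lemma and is correct: the covariance path $\Sigma^{(h)}=(1-h)I+h(\rho_{ij})$, Plackett's identity $\partial_{\sigma_{ij}}\phi_\Sigma=\partial^2_{x_ix_j}\phi_\Sigma$, the double integration by parts leaving only the boundary value $\phi_{\Sigma^{(h)}}|_{x_i=x_j=u}$, and the bound by the bivariate marginal density $p^{(h)}_{ij}(u,u)$ are all exactly as in the classical proof. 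Your handling of the signs---dropping the pairs with $\rho_{ij}<0$ because they contribute nonpositively to $F'(h)$, and using the monotonicity of $r\mapsto(2\pi)^{-1}(1-r^2)^{-1/2}\exp(-u^2/(1+r))$ on $[0,1)$ to replace $h\rho_{ij}$ by $\rho_{ij}$---is clean and yields the one-sided bound in the form the paper states it (with $\rho_{ij}$ rather than $|\rho_{ij}|$ in the exponential). So you have supplied a full proof where the paper simply quotes the result; nothing further to compare.
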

\begin{proof}[\bfseries\upshape Proof of Theorem~\ref{thm_gen} on page~\pageref{thm_gen}.]
Let % $N=\sharp\mathcal{P}_n$ and denote 
$u_N=a_N+b_N\tau$, with $a_N$ and $b_N$ as in \eqref{eq:defab}. We then have, using Berman's Inequality,
\begin{align*}
\lefteqn
{
\P\left[\max_{1\leq i\leq N}\xi_i^{(N)}\leq u_N\right]-\P\left[\max_{1\leq i\leq N}\eta_i\leq u_N\right]
}\\
&\leq  
O(1) \sum_{1\leq i<j\leq N} |\Cov(\xi_i^{(N)},\xi_j^{(N)})| \exp\left(-\frac{u_N^2}{1+\rho}\right)\\
&\leq 
O(N^{1+\eps}) \exp\left(-\frac{u_N^2}{1+\rho}\right)
%&=o(1) \text{ as } N\to\infty.
\end{align*}
as  $N\to\infty$. Noting that $u_N\sim \sqrt{2\log N}$ and choosing $\eps$ small enough, we see that the right-hand side is $o(1)$ as $N\to\infty$.  Combining this estimate with~\eqref{eq:equatmain}, we obtain~\eqref{eq:stat_theo}, which finishes the proof of the theorem.
%$$
%\lim_{N\to\infty}\P\left[\max_{1\leq i\leq N}\xi_i^{(N)}\leq u_N\right]=\exp(-e^{-\tau}),
%$$
\end{proof}

\begin{proof}[\bfseries\upshape Proof of Theorem~\ref{thm_dyadic} on page~\pageref{thm_dyadic}]
We are going to apply Theorem~\ref{thm_gen} with $N=\sharp\mathcal{P}_n$ to the random vector $(\xi_P^{(N)})_{P\in \mathcal{P}_n}$, where
\begin{equation*}
\xi_P^{(N)}=\frac{\sum_{i \in P} r_i}{\sqrt{\sharp P}},\;\;\; P\in\mathcal{P}_n.
\end{equation*}

First we show that~\eqref{eq:bounded_cov} holds. Let $P_1$ and $P_2$ be two different dyadic subcubes of $K_n$. If $P_1\cap P_2=\emptyset$, then we have $\Cov(\xi_{P_1}^{(N)},\xi_{P_2}^{(N)})=0$, so that~\eqref{eq:bounded_cov} is fulfilled. If $P_1\cap P_2\neq \emptyset$, then we have either $P_1\subset P_2$ or $P_2\subset P_1$. Assuming that, say, $P_1\subset P_2$, we obtain (due to the dyadic structure) that $\sharp P_1 \leq \sharp P_2 / 2$, and hence,
\begin{equation*}
\Cov(\xi_{P_1}^{(N)},\xi_{P_2}^{(N)})
=\frac{\sharp (P_1\cap P_2)}{\sqrt{\sharp P_1 \sharp P_2}} =\frac{\sqrt{\sharp P_1}}{\sqrt{\sharp P_2}} \leq
\frac{1}{\sqrt{2}}.
\end{equation*}

Now we show that~\eqref{eq:sparse} holds. A dyadic cube with side length $2^k$ contains $2^{d(k-i)}$ dyadic cubes with side length $2^i$, where $0\leq i\leq k$. Therefore, the total number of dyadic cubes which are contained in a dyadic cube with side length $2^k$ is $\sum_{i=0}^{k-1}2^{d(k-i)}\leq 2^{d(k+1)}$. Further, the number of dyadic cubes in $\mathcal{P}_n$ having side length $2^k$ is $\lfloor n/2^k \rfloor^d$. Note also that, trivially, $\sharp\mathcal{P}_n\geq n^d$. So, we may estimate the cardinality on the left-hand side of~\eqref{eq:sparse} from above by
\begin{equation*}
2\sum_{k=0}^{\lfloor \log n/\log 2\rfloor} \lfloor n/2^k \rfloor^d \cdot 2^{d(k+1)}=O(n^d\log n)=O(\sharp\mathcal{P}_n^{1+\eps})
\end{equation*}
as $N\to\infty$ and for every $\eps>0$. This finishes the proof.
\end{proof}

\bibliographystyle{apalike}
\bibliography{mrc}

\end{document}